\documentclass[conference]{IEEEtran}
\IEEEoverridecommandlockouts
\usepackage{bbm}

\usepackage{cite}
\usepackage{amsmath,amssymb,amsfonts}
\usepackage{graphicx}
\usepackage{textcomp}
\usepackage{xcolor}
\usepackage{cite}
\usepackage{dblfloatfix}
\usepackage{subcaption}
\usepackage{overpic}
\usepackage{amsmath,amssymb,amsfonts}

\usepackage{graphicx}
\usepackage{textcomp}
\usepackage{xcolor}
\usepackage{float}
\usepackage{amsthm}
\usepackage{graphicx}
\usepackage{epstopdf}
\usepackage{amsmath,bm}
\usepackage{amsfonts}
\usepackage{amssymb}
\usepackage{color}
\usepackage{multirow}
\usepackage{multicol}
\usepackage{soul,xcolor}
\usepackage{algorithm}
\usepackage{algpseudocode}

\usepackage{comment}

\theoremstyle{plain}

\newtheorem{lemma}{Lemma}

\newcommand{\vect}[1]{\mathbf{#1}}

\def\diag{\mathrm{diag}}

\def\Htran{\mbox{\tiny $\mathrm{H}$}}
\def\Ttran{\mbox{\tiny $\mathrm{T}$}}
\def\CN{\mathcal{N}_{\mathbb{C}}}

\def\BibTeX{{\rm B\kern-.05em{\sc i\kern-.025em b}\kern-.08em
    T\kern-.1667em\lower.7ex\hbox{E}\kern-.125emX}}
\begin{document}

\title{Uplink Hardware Impairments in OFDM-Based Cell-Free Massive MIMO With Impaired Wireless Fronthaul \\
\thanks{This work was carried out within the scope of the project 122C149 – Intelligent End-to-End Design of Energy-Efficient and Hardware Impairments-Aware Cell-Free Massive MIMO for Beyond 5G. \"O. T. Demir was supported by the 2232-B International Fellowship for Early Stage Researchers Programme funded by the Scientific and Technological Research Council of Türkiye (TÜBİTAK).}
}

\author{\IEEEauthorblockN{Özlem Tuğfe Demir}
\IEEEauthorblockA{\textit{Department of Electrical and Electronics Engineering} \\
\textit{Bilkent University}\\
Ankara, Turkiye \\
E-mail: ozlemtugfedemir@bilkent.edu.tr}
\vspace{-8mm}
}

\maketitle

\begin{abstract}
This paper studies the uplink performance of OFDM-based cell-free massive MIMO systems in the presence of hardware impairments affecting both low-cost access points and wireless fronthaul transceivers. We consider a centralized architecture with functional split option~8, where the sampled baseband signals are forwarded from the access points to the central processing unit through wireless fronthaul links operating at mmWave or sub-THz frequencies. A unified analytical framework is developed to characterize the aggregate impact of hardware distortions across the access and fronthaul links. In particular, closed-form expressions are derived for channel estimation and uplink spectral efficiency under hardware-impaired transmission and reception. To mitigate the resulting performance degradation, distortion-aware linear receive combining is considered for the wireless fronthaul. Moreover, a max-min fair resource allocation framework is developed to jointly optimize the uplink transmit powers, fronthaul transmit powers, and fronthaul time expansion factor.
\end{abstract}

\begin{IEEEkeywords}
Cell-free massive MIMO, hardware impairments, OFDM, wireless fronthaul, fiber-wireless
\end{IEEEkeywords}

\vspace{-3mm}
\section{Introduction}

Cell-free massive MIMO is a key technology for beyond-5G and 6G networks, where distributed access points (APs) jointly serve multiple user equipments (UEs) via centralized processing \cite{cell-free-book,ngo2024ultradense}. This architecture critically depends on the availability of fronthaul links connecting the APs to a central processing unit (CPU). While most existing works assume ideal, infinite-capacity fronthaul—typically realized via optical fiber—such solutions are costly and difficult to scale. This has motivated recent interest in wireless fronthaul architectures \cite{demirhan_wireless_fronthaul,ibrahim_wireless_fronthaul,ozan_wireless_fronthaul}.

However, the impact of fronthaul limitations on resource allocation and user fairness remains insufficiently understood. In particular, limited fronthaul capacity introduces a bottleneck that directly affects the end-to-end performance and necessitates a joint design of access-link and fronthaul resources. In addition, practical deployments rely on low-cost hardware, whose impairments further degrade performance. While prior works have considered hardware impairments on the access link \cite{impairment1,impairment2} and limited fronthaul resolution \cite{impairment3,impairment4}, their joint effect across both links has not been fully characterized. Our recent work \cite{asilomar2025_wireless_fronthaul} provided initial insights in this direction under an amplify-and-forward fronthaul model.

In this paper, we develop a unified framework for cell-free massive MIMO with wireless fronthaul under hardware impairments. In contrast to \cite{asilomar2025_wireless_fronthaul}, we consider a digital fronthaul architecture based on a PHY--RF functional split, where sampled baseband signals are transmitted over capacity-limited wireless links, in line with \cite{ozan_wireless_fronthaul}. We derive analytical expressions for channel estimation and uplink spectral efficiency by jointly modeling impairments on both the access and fronthaul links, while explicitly accounting for the fronthaul rate constraint. Based on this framework, we formulate a max-min fair resource allocation problem that jointly optimizes the uplink and fronthaul transmit powers together with the fronthaul time expansion factor. The problem admits an efficient decomposition and can be solved via a low-complexity fixed-point algorithm that achieves the global optimum.

\vspace{-2.5mm}

\section{System Model}

We consider the uplink of a cell-free massive MIMO system operating in time-division duplex (TDD) mode with OFDM. There are $L$ APs connected to the cloud via wireless fronthaul links (e.g., mmWave or sub-THz).  

We assume that the CPU is equipped with $M$ antennas. The system serves $K$ single-antenna UEs, and the access transmission takes place over a mid-band (sub-6\,GHz) channel. Each AP is equipped with $N$ antennas for the access link and a single antenna for the fronthaul link.

We assume block fading channel model and let $\tau_c$ denote the number of channel uses in a coherence block, which consists of $\tau_p$ symbols allocated for uplink pilot transmission for channel estimation, while the remaining $\tau_c-\tau_p$ symbols are used for uplink data transmission.
 
The uplink channel between UE~$k$ and AP~$l$ in an arbitrary coherence block is denoted by $\mathbf{h}_{kl} \sim \CN(\vect{0},\vect{R}_{kl})$, while the wireless fronthaul channel from AP~$l$ to the CPU is represented by $\mathbf{f}_l \in \mathbb{C}^{M}$.

In this paper, we focus on the centralized operation of cell-free massive MIMO, where the APs act as relays and both channel estimation and uplink data detection are performed at the CPU. To this end, we consider functional split option~8 defined in the 3GPP specification~\cite{3GPP_functional_split}. This corresponds to the physical layer (PHY)--radio frequency (RF) split, where the APs only perform RF processing and forward the sampled and quantized baseband signals to the CPU via fronthaul links, while all remaining baseband processing is carried out at the CPU. With the PHY--RF functional split (option~8), the required fronthaul data rate for each AP is given by~\cite{perez2018fronthaul}
\vspace{-2mm}
\begin{equation}
    R_{\rm frt} = 2 f_s N_{\rm bits} N, \label{eq:fronthaul} \vspace{-2mm}
\end{equation}
where $f_s$ and $N_{\rm bits}$ denote the sampling frequency and the number of quantization bits per sample, respectively. Usually, the number of quantization bits is sufficiently high such that the impact of quantization becomes negligible, provided that the achievable rate of the wireless fronthaul link exceeds $R_{\rm frt}$ per AP.

\vspace{-2mm}

\section{Uplink Channel Estimation}
 We consider a set of $\tau_p$ mutually orthogonal pilot sequences $\boldsymbol{\phi}_{1},\ldots,\boldsymbol{\phi}_{\tau_p}\in\mathbb{C}^{\tau_p}$ that are assigned to the UEs and reused by multiple UEs. The sequences satisfy
     \begin{equation}
\boldsymbol{\phi}_{t_1} ^{\Htran} \boldsymbol{\phi}_{t_2} = \begin{cases} \tau_p, & t_1 = t_2,\\
0, & t_1 \neq t_2. \end{cases}
\end{equation}    
The pilot sequences are assigned to the UEs in a deterministic way and $t_k$ denotes the index of the pilot assigned to UE $k$ as $t_k \in \{ 1, \ldots, \tau_p\}$. The set of UEs that share the same pilot with UE $k$ is defined as
\vspace{-2mm}
\begin{equation}
\mathcal{P}_k = \left\{ i : \  t_i = t_k, \ i=1,\ldots,K  \right\} \subset \{ 1, \ldots, K\}.
\end{equation}
After initial RF processing (with Option 8), the received signals are transmitted to the CPU, and the channel estimation is performed there for each coherence block. The received signal at AP $l$ at a particular coherence block during the entire pilot transmission is given by
\vspace{-2mm}
\begin{equation} \label{eq:received-pilot-matrix}
\vect{Y}_{l}^{{\rm p}} = \sum_{i=1}^{K} \sqrt{\kappa_{\rm ac}p_i } \vect{h}_{il} \boldsymbol{\phi}_{t_i}^{\Ttran}+ \vect{\Xi}_l^{{\rm p}}+\vect{N}_{l}^{{\rm p}}  \vspace{-2mm}
\end{equation}
where $p_i\geq 0$ is the pilot transmit power of UE $i$ and $\vect{N}_{l}^{{\rm p}}  \in \mathbb{C}^{N \times \tau_p}$ is the receiver noise with independent and identically distributed (i.i.d.) elements as $ \CN (0, \sigma^2_{\rm ac})$. The $\kappa_{\rm ac}\in (0,1]$ is the hardware quality factor and the hardware distortion noise is represented by $\vect{\Xi}_l^{{\rm p}}$. The columns of this matrix are modeled as independent Gaussian vectors distributed as $\CN(\vect{0},\vect{D}_{{\rm ac},l})$ \cite{massive_mimo_book}, where 
\vspace{-2mm}
\begin{align}
\vect{D}_{{\rm ac},l} = \diag\left((1-\kappa_{\rm ac})\sum_{k=1}^Kp_k \vect{h}_{kl}\vect{h}_{kl}^{\Htran}\right), \label{eq:distortion-access-covariance}
\end{align}
where $\diag(\cdot)$ extracts the diagonal part of the matrix by setting off-diagonal entries to zero.
 Using all the received pilot signals, the linear minimum mean-squared error (LMMSE) channel estimate $\widehat{\vect{h}}_{kl}$ of the channel $\vect{h}_{kl}$ can be derived using \cite{massive_mimo_book} as follows:
\begin{align} \label{eq:estimates}
\widehat{\vect{h}}_{kl} =& \sqrt{\kappa_{\rm ac}p_k} \vect{R}_{kl} \left(\sum_{i \in \mathcal{P}_k} \kappa_{\rm ac}p_i\tau_p \vect{R}_{il} + \overline{\vect{D}}_{{\rm ac},l}+\sigma^2_{\rm ac}\vect{I}_{N}\right)^{-1} \nonumber\\
&\cdot\vect{Y}_{l}^{{\rm p}}  \boldsymbol{\phi}_{t_k}^{*},
\end{align}
where 
\begin{align}
   \overline{ \vect{D}}_{{\rm ac},l} = \diag\left((1-\kappa_{\rm ac})\sum_{k=1}^Kp_k \vect{R}_{kl}\right).
\end{align}

\section{Uplink Data Transmission}

We denote the unit-power symbol of UE $k$ by $s_k$, i.e., $\mathbb{E}\{|s_k|^2\}=1$. The received signal at AP $l$ is given as
\vspace{-2mm}
\begin{align}
\mathbf{y}_{{\rm ac},l}
= \sqrt{\kappa_{\mathrm{ac}}}
\sum_{k=1}^K \mathbf{h}_{kl}\sqrt{\eta_k} s_k
+ \boldsymbol{\xi}_{\mathrm{ac},l}
+ \mathbf{n}_{\mathrm{ac},l}, \label{eq:access}
\end{align}
where $\eta_k\geq 0$ is the uplink transmit power of UE $k$. The distortion $\boldsymbol{\xi}_{\mathrm{ac},l}$ is modeled as $\boldsymbol{\xi}_{\mathrm{ac},l} \sim
\mathcal{CN}(\mathbf{0}, \mathbf{D}_{\mathrm{ac},l})$ where $\mathbf{D}_{\mathrm{ac},l}$ is given from \eqref{eq:distortion-access-covariance} with $p_k$'s being replaced by $\eta_k$. The additive independent noise is denoted by $\vect{n}_{\mathrm{ac},l}\sim \CN(\vect{0},\sigma^2_{\rm ac}\vect{I}_N)$. At the CPU, all the signals are concatenated, resulting in 
\begin{align}
  \underbrace{ \begin{bmatrix}\vect{y}_{{\rm ac},1} \\ \vdots \\ \vect{y}_{{\rm ac},L}\end{bmatrix}}_{=\vect{y}_{\rm ac}}=\sqrt{\kappa_{\rm ac}}\sum_{k=1}^K\underbrace{\begin{bmatrix}\vect{h}_{k1} \\ \vdots \\ \vect{h}_{kL} \end{bmatrix}}_{=\vect{h}_k}\sqrt{\eta_k}s_k + \underbrace{\begin{bmatrix}\boldsymbol{\xi}_{\mathrm{ac},1} \\ \vdots \\ \boldsymbol{\xi}_{\mathrm{ac},L}\end{bmatrix}}_{=\boldsymbol{\xi}_{\mathrm{ac}}}+\underbrace{\begin{bmatrix}\vect{n}_{{\rm ac},1} \\ \vdots \\ \vect{n}_{{\rm ac},L}\end{bmatrix}}_{=\vect{n}_{\rm ac}}.
\end{align}
For UE $k$, the CPU applies centralized receive combiner $\vect{v}_k$, which leads to 
\begin{align}
   \vect{v}_k^{\Htran}\vect{y}_{\rm ac}&= \sqrt{\kappa_{\rm ac}\eta_k}\mathbb{E}\{\vect{v}_k^{\Htran}\vect{h}_k\}+\sqrt{\kappa_{\rm ac}\eta_k}\left(\vect{v}_k^{\Htran}\vect{h}_k-\mathbb{E}\{\vect{v}_k^{\Htran}\vect{h}_k\}\right) \nonumber\\
   &\quad +\sum_{i=1,i\neq k}^K\sqrt{\kappa_{\rm ac}\eta_i}\vect{v}_k^{\Htran}\vect{h}_i+\vect{v}_k^{\Htran}\boldsymbol{\xi}_{\rm ac}+\vect{v}_k^{\Htran}\vect{n}_{\rm ac},
\end{align}
where we have separated the average channel gain since we will employ the so-called use-and-then-forget capacity lower-bounding technique, as done in \cite[Th.~6.2]{massive_mimo_book}. In this approach, the average channel gain is treated as the known desired signal channel, while all remaining terms are treated as worst-case uncorrelated noise. Following steps similar to those in the proof of \cite[Th.~6.2]{massive_mimo_book}, we can derive an achievable uplink SE for UE $k$ as
\begin{align}
    \mathrm{SE}_k = \frac{\tau_c-\tau_p}{\tau_c}\log_2\left(1+\mathrm{SINR}_k\right)
\end{align}
with the effective signal-to-interference-plus-noise ratio (SINR) given in \eqref{eq:UE-SINR} at the top of the following page.
\begin{figure*}
\begin{align}
    \mathrm{SINR}_k = \frac{\kappa_{\rm ac}\eta_k \left|\mathbb{E}\{\vect{v}_k^{\Htran}\vect{h}_{k}\}\right|^2}{\sum_{i=1}^K\kappa_{\rm ac}\eta_i \mathbb{E}\{|\vect{v}_k^{\Htran}\vect{h}_i|^2\}-\kappa_{\rm ac}\eta_k\left|\mathbb{E}\{\vect{v}_k^{\Htran}\vect{h}_{k}\}\right|^2+\sum_{i=1}^K(1-\kappa_{\rm ac})\eta_i \mathbb{E}\{\Vert\vect{v}_k\odot\vect{h}_i\Vert^2\}+\sigma^2_{\rm ac}\mathbb{E}\{\Vert \vect{v}_k\Vert^2\}}. \label{eq:UE-SINR}
\end{align}
\hrulefill
\vspace{-5mm}
\end{figure*}
The expectation values can be computed using Monte-Carlo trials for a given combining vector. One possible combining method is to consider the MMSE combiner under the assumption of ideal hardware, which is the adopted approach in the numerical results of this paper. Later, we will optimize the uplink powers $\eta_k$ to maximize the minimum UE uplink SE.

\section{Wireless Fronthaul with Hardware Impairments}

We consider wireless fronthaul links operating at mmWave or THz frequencies. Recall that $\vect{f}_l \in \mathbb{C}^{M}$ denotes the fronthaul channel between AP $l$ and the CPU. 
We assume that this fronthaul channel is perfectly known at the CPU. 
This assumption is reasonable in the considered architecture since both the APs and the CPU are typically deployed at elevated and fixed locations, which results in slowly varying propagation conditions. 
Moreover, the fronthaul links can be trained frequently with negligible overhead compared to the data payload, enabling accurate channel estimation.

Due to hardware impairments at the fronthaul transceivers, the transmitted signal from AP $l$ is modeled as
\vspace{-2mm}
\begin{equation}
z_l
=
\sqrt{\kappa_{\rm fh}}\sqrt{P_l} x_l
+
\xi_{{\rm fh},l}, \vspace{-2mm}
\end{equation}
where  $x_l$ is the unit-power information symbol, $P_l\geq0$ is the fronthaul transmit power of AP $l$, $\kappa_{\rm fh}\in(0,1]$ denotes the hardware quality factor, and
$\xi_{{\rm fh},l}\sim\CN(0,(1-\kappa_{\rm fh})P_l)$ denotes the transmitter distortion noise.

All the APs share the same fronthaul bandwidth $B_{\rm frt}$, and, thus, their fronthaul transmissions interfere with each other at the CPU. Hence, the received signal at the CPU is given by
\vspace{-2mm}
\begin{equation}
\vect{y}_{{\rm fh}}
=
\sum_{l=1}^L\sqrt{\kappa_{\rm fh}P_l}\vect{f}_l x_l
+
\sum_{l=1}^L\vect{f}_l \xi_{{\rm fh},l}
+
\vect{n}_{{\rm fh}}, \vspace{-2mm}
\end{equation}
where
$\vect{n}_{{\rm fh}}\sim\CN(\vect{0},N_0 B_{\rm frt} \vect{I}_M)$ and $N_0$ is the noise spectral density in Watts per Hz.

By defining $\vect{b}_l \triangleq \sqrt{\kappa_{\rm fh}P_l}\,\vect{f}_l$, the received signal can be rewritten as
\vspace{-2mm}
\begin{equation}
\vect{y}_{{\rm fh}}
=
\vect{b}_l x_l
+
\sum_{i=1,i\neq l}^L\vect{b}_i x_i
+
\sum_{i=1}^L\vect{f}_i \xi_{{\rm fh},i}
+
\vect{n}_{{\rm fh}}. \vspace{-2mm}
\end{equation}

To detect the stream sent by AP $l$, the CPU applies a linear receive combiner $\vect{w}_l\in\mathbb{C}^{M}$. Since the fronthaul channels are known and the distortion statistics are available, the combiner can be designed in a distortion-aware manner by accounting for the interference, hardware distortion, and receiver noise covariance. The interference-plus-distortion-plus-noise covariance matrix for decoding the stream of AP $l$ is
\begin{align}
\vect{A}_l
&=
\sum_{i=1,i\neq l}^L\kappa_{\rm fh}P_i \vect{f}_i\vect{f}_i^{\Htran}
+
\sum_{i=1}^L(1-\kappa_{\rm fh})P_i \vect{f}_i\vect{f}_i^{\Htran}
+
N_0 B_{\rm frt} \vect{I}_M.
\end{align}

For this model, the SINR-maximizing linear combiner is the distortion-aware MMSE combiner,
\begin{equation}
\vect{w}_l = \alpha_l \vect{A}_l^{-1}\vect{b}_l,
\end{equation}
where $\alpha_l$ is an arbitrary non-zero scalar. The resulting maximum achievable SINR is
\begin{equation}
\mathrm{SINR}_{{\rm fh},l}
=
\vect{b}_l^{\Htran}\vect{A}_l^{-1}\vect{b}_l.
\end{equation}

Hence, the achievable fronthaul rate for AP $l$ is given by
\begin{equation}
R_{{\rm fh},l}
=
B_{\rm frt}\log_2\left(1+\vect{b}_l^{\Htran}\vect{A}_l^{-1}\vect{b}_l\right).
\label{eq:fh_rate_wireless}
\end{equation}

To guarantee that the sampled baseband signals can be forwarded without loss, the fronthaul rate must satisfy

\begin{equation}
R_{{\rm fh},l} \ge R_{\rm frt}, \qquad l=1,\ldots,L,
\end{equation}
where $R_{\rm frt}$ is the required fronthaul rate for the PHY--RF functional split from \eqref{eq:fronthaul}.

\section{Max-Min Fair Resource Allocation}
\vspace{-1mm}
We now formulate a max-min fair resource allocation problem by jointly optimizing the fronthaul and access transmit power together with the fronthaul time expansion factor.

To account for fronthaul limitations, we introduce a time expansion factor $s\ge 1$, which represents the additional time required to forward the sampled baseband signals over the wireless fronthaul. Since centralized processing can only be performed after the baseband samples are delivered to the CPU, the effective uplink SE of UE $k$ is reduced by the factor $1/s$ and becomes
\vspace{-1mm}
\begin{equation}
\widetilde{\mathrm{SE}}_k
=
\frac{1}{s}\frac{\tau_c-\tau_p}{\tau_c}\log_2\left(1+\mathrm{SINR}_k\right). \vspace{-1mm}
\end{equation}

The parameter $s$ must be sufficiently large so that the fronthaul requirement of every AP is satisfied. Specifically, for each AP $l=1,\ldots,L$, we impose
\begin{equation}
s R_{{\rm fh},l} \ge R_{\rm frt}.
\end{equation}

Our goal is to maximize the minimum achievable UE SE by jointly optimizing the uplink transmit powers $\{\eta_k\}_{k=1}^{K}$, the fronthaul transmit powers $\{P_l\}_{l=1}^{L}$, and the time expansion factor $s$. The resulting optimization problem is
\begin{subequations}\label{eq:joint_problem}
\begin{align}
&\max_{\{\eta_k\},\,\{P_l\},s,\,\nu}\quad  \nu \\
&\text{s.t.}\quad
 \frac{1}{s}\frac{\tau_c-\tau_p}{\tau_c}\log_2\left(1+\mathrm{SINR}_k\right)\ge \nu,
\quad k=1,\ldots,K,\\
& \quad \quad sR_{{\rm fh},l}\ge R_{\rm frt}, \quad l=1,\ldots,L,\\
& \quad \quad 0\leq \eta_k \leq P_{{\rm UE}}^{\max}, \quad k=1,\ldots,K,\\
&\quad \quad  0\leq P_l \leq P_{{\rm fh}}^{\max}, \quad l=1,\ldots,L,\\
& \quad \quad s\ge 1,
\end{align}
\end{subequations}
where $\nu$ denotes the minimum UE SE, $P_{{\rm UE}}^{\max}$ is the maximum uplink transmit power of each UE, and $P_{{\rm fh}}^{\max}$ is the maximum fronthaul transmit power of each AP.

\begin{lemma}\label{lem:decomposition}
The problem in \eqref{eq:joint_problem} can be decomposed into two independent max-min subproblems: a fronthaul power-control problem and an access-link power-control problem.

In particular, for any feasible fronthaul powers $\{P_l\}_{l=1}^{L}$, the minimum feasible value of $s$ satisfying the fronthaul constraints is
\begin{equation}
s^\star
=
\max\!\left(
1,\,
\max_{l=1,\ldots,L}\frac{R_{\rm frt}}{R_{{\rm fh},l}}
\right).
\label{eq:sstar_new}
\end{equation}
Hence, minimizing $s$ is equivalent to maximizing the minimum fronthaul rate, or equivalently, maximizing the minimum fronthaul SINR:
\begin{subequations}\label{eq:fh_subproblem}
\begin{align}
\max_{\{P_l\},\,t_{\rm fh}}\quad & t_{\rm fh}\\
\textnormal{s.t.}\quad
& \mathrm{SINR}_{{\rm fh},l}\ge t_{\rm fh}, \qquad l=1,\ldots,L,\\
& 0\le P_l\le P_{{\rm fh}}^{\max}, \qquad l=1,\ldots,L.
\end{align}
\end{subequations}
After obtaining the optimal solution $t_{\rm fh}^\star$ to \eqref{eq:fh_subproblem}, the corresponding minimum time expansion factor is
\begin{equation}
s^\star
=
\max\!\left(
1,\,
\frac{R_{\rm frt}}{B_{\rm frt}\log_2(1+t_{\rm fh}^\star)}
\right).
\label{eq:s_from_tfh}
\end{equation}

For this fixed $s^\star$, the original problem reduces to the access-link max-min SINR problem
\begin{subequations}\label{eq:ac_subproblem}
\begin{align}
\max_{\{\eta_k\},\,t_{\rm ac}}\quad & t_{\rm ac}\\
\textnormal{s.t.}\quad
& \mathrm{SINR}_k \ge t_{\rm ac},\qquad k=1,\ldots,K,\\
& 0\leq \eta_k \leq P_{{\rm UE}}^{\max}, \qquad k=1,\ldots,K.
\end{align}
\end{subequations}

Therefore, the global optimum of \eqref{eq:joint_problem} is obtained by first solving \eqref{eq:fh_subproblem}, then computing $s^\star$ from \eqref{eq:s_from_tfh}, and finally solving \eqref{eq:ac_subproblem}.
\end{lemma}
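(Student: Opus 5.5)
The plan is to exploit the variable separation in \eqref{eq:joint_problem}. The access powers $\{\eta_k\}$ enter only through $\mathrm{SINR}_k$ in \eqref{eq:UE-SINR}, which is independent of $\{P_l\}$ and $s$. The fronthaul powers $\{P_l\}$ enter only through $R_{{\rm fh},l}$ in \eqref{eq:fh_rate_wireless}, which is independent of $\{\eta_k\}$. The factor $s$ couples the two groups only multiplicatively. Eliminating $\nu$, the objective becomes
\begin{equation*}
\frac{1}{s}\,\frac{\tau_c-\tau_p}{\tau_c}\,\min_k \log_2(1+\mathrm{SINR}_k),
\end{equation*}
which factors as $g(s)\,h(\{\eta_k\})$ with $g(s)=1/s$ positive and decreasing, and $h$ nonnegative. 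The feasible set is a Cartesian product: $\{\eta_k\}$ lies in a box, while $(\{P_l\},s)$ lies in the set defined by $sR_{{\rm fh},l}\ge R_{\rm frt}$, $s\ge1$ and the power box. Hence the maximum of the product equals
\begin{equation*}
\Big(\max_{(\{P_l\},s)} \tfrac{1}{s}\Big)\Big(\max_{\{\eta_k\}} h\Big),
\end{equation*}
because both factors are nonnegative and chosen independently. This already establishes the decomposition.

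Next I will characterize the fronthaul factor. For fixed $\{P_l\}$, the constraints on $s$ are $s\ge1$ and $s\ge R_{\rm frt}/R_{{\rm fh},l}$ for every $l$, taking $R_{{\rm fh},l}>0$. Since $1/s$ is decreasing, the best choice is the smallest feasible $s$, which gives \eqref{eq:sstar_new}. Minimizing this over $\{P_l\}$ is the same as maximizing $\min_l R_{{\rm fh},l}$, because $\max(1,\cdot)$ is monotone. Since $B_{\rm frt}\log_2(1+x)$ is strictly increasing, this in turn is the same as maximizing $\min_l \mathrm{SINR}_{{\rm fh},l}$. Introducing the epigraph variable $t_{\rm fh}$ gives \eqref{eq:fh_subproblem}. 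Substituting $\min_l R_{{\rm fh},l}=B_{\rm frt}\log_2(1+t_{\rm fh}^\star)$ into \eqref{eq:sstar_new} yields \eqref{eq:s_from_tfh}.

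For the access factor, $\min_k\log_2(1+\mathrm{SINR}_k)$ is a monotone transform of $\min_k\mathrm{SINR}_k$. Its maximization over the power box is therefore exactly \eqref{eq:ac_subproblem}, written in epigraph form with $t_{\rm ac}$. The optimal value of \eqref{eq:joint_problem} is then
\begin{equation*}
\nu^\star=\frac{1}{s^\star}\,\frac{\tau_c-\tau_p}{\tau_c}\,\log_2(1+t_{\rm ac}^\star),
\end{equation*}
attained at the concatenation of the two subproblem optimizers. Global optimality follows from the product/Cartesian argument above, not from any convexity, which is the key point to state explicitly.

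The main obstacle is rigor in the separation step rather than any computation. Two points need care. First, one must check that maxima exist: the fronthaul SINR is continuous in $\{P_l\}$ on a compact box, and $\mathrm{SINR}_k$ is continuous in $\{\eta_k\}$ wherever the denominator is positive, which holds thanks to $\sigma^2_{\rm ac}>0$. Second, the degenerate case $R_{{\rm fh},l}=0$, for example $P_l=0$, makes the constraint infeasible for every finite $s$ and must be excluded. For this I would argue that $t_{\rm fh}^\star>0$ because setting all $P_l=P_{\rm fh}^{\max}$ gives a positive SINR, since $\vect{A}_l\succeq N_0B_{\rm frt}\vect{I}_M\succ\vect{0}$ and $\vect{f}_l\neq\vect{0}$.
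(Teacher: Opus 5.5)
Your proposal is correct and takes essentially the same route as the paper: choose the smallest feasible $s$ for given $\{P_l\}$, reduce its minimization to max-min fronthaul SINR via monotonicity of $B_{\rm frt}\log_2(1+x)$, and then, with $s^\star$ fixed and independent of $\{\eta_k\}$, reduce to the max-min access SINR problem. Your explicit product-over-a-Cartesian-product argument, together with your checks that $t_{\rm fh}^\star>0$ (so the fronthaul constraints are feasible) and that the maxima exist, makes rigorous what the paper's proof leaves implicit.
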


\begin{proof}
For any fixed $\{P_l\}_{l=1}^{L}$, the fronthaul constraints in \eqref{eq:joint_problem} imply
\begin{equation}
s \ge \frac{R_{\rm frt}}{R_{{\rm fh},l}}, \qquad l=1,\ldots,L,
\end{equation}
together with $s\ge 1$. Hence, the minimum feasible $s$ is given by \eqref{eq:sstar_new}. Since $\widetilde{\mathrm{SE}}_k$ is decreasing in $s$, the optimal solution must use this minimum feasible value.

Moreover, since $R_{{\rm fh},l}=B_{\rm frt}\log_2(1+\mathrm{SINR}_{{\rm fh},l})$ is monotonically increasing in $\mathrm{SINR}_{{\rm fh},l}$, minimizing $s$ is equivalent to maximizing $\min_l \mathrm{SINR}_{{\rm fh},l}$, which yields \eqref{eq:fh_subproblem}. Once the fronthaul problem is solved, the value of $s^\star$ is fixed and does not depend on the access powers $\{\eta_k\}$. Since $\widetilde{\mathrm{SE}}_k$ is monotonically increasing in $\mathrm{SINR}_k$, maximizing the minimum effective SE is then equivalent to maximizing the minimum access-link SINR, which gives \eqref{eq:ac_subproblem}. This proves the decomposition and the global optimality of the sequential solution.
\end{proof}

The decomposition in Lemma~\ref{lem:decomposition} enables an efficient solution based on two sequential fixed-point procedures, one for the fronthaul power-control problem in \eqref{eq:fh_subproblem} and one for the access-link power-control problem in \eqref{eq:ac_subproblem}.

For the fronthaul stage, the powers are iteratively updated according to the standard max-min SINR balancing rule: at each iteration, the power of AP~$l$ is updated proportionally to the ratio between its current transmit power and its current fronthaul SINR, i.e.,
\begin{equation}
P_l^{(j)} \leftarrow \frac{P_l^{(j-1)}}{\mathrm{SINR}_{{\rm fh},l}\!\left(\vect{P}^{(j-1)}\right)},
\qquad l=1,\ldots,L,
\end{equation}
followed by a normalization to enforce the per-AP power constraint:
\begin{equation}
P_l^{(j)} \leftarrow \frac{P_{\rm fh}^{\max}}{\max_i P_i^{(j)}}P_l^{(j)},
\qquad l=1,\ldots,L.
\end{equation}
After convergence, the common fronthaul SINR is obtained as
\begin{equation}
t_{\rm fh}^{\star}=\min_{l=1,\ldots,L}\mathrm{SINR}_{{\rm fh},l}\!\left(\vect{P}^{\star}\right),
\end{equation}
and the minimum feasible fronthaul time expansion factor is computed as
\begin{equation}
s^\star=\max\!\left(1,\frac{R_{\rm frt}}{B_{\rm frt}\log_2(1+t_{\rm fh}^{\star})}\right).
\end{equation}

Next, for the access-link stage, the UE transmit powers are updated similarly as
\begin{equation}
\eta_k^{(j)} \leftarrow \frac{\eta_k^{(j-1)}}{\mathrm{SINR}_k\!\left(\boldsymbol{\eta}^{(j-1)}\right)},
\qquad k=1,\ldots,K,
\end{equation}
followed by the normalization
\begin{equation}
\eta_k^{(j)} \leftarrow \frac{P_{\rm UE}^{\max}}{\max_i \eta_i^{(j)}}\eta_k^{(j)},
\qquad k=1,\ldots,K.
\end{equation}
This procedure is repeated until convergence, yielding the optimal access-link power vector $\boldsymbol{\eta}^{\star}$.

\begin{lemma}\label{lem:fp_optimality}
For any strictly positive initialization, the fixed-point iterations converge to the globally optimal solutions of the fronthaul max-min SINR problem in \eqref{eq:fh_subproblem} and the access-link max-min SINR problem in \eqref{eq:ac_subproblem}, respectively.
\end{lemma}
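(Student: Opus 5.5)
We would prove Lemma~\ref{lem:fp_optimality} by casting both subproblems as max-min SINR balancing problems with standard interference functions. We then invoke the nonlinear Perron--Frobenius (concave/standard interference function) theory for the normalized fixed-point iteration $\vect{p}\mapsto \frac{P^{\max}}{\Vert T(\vect{p})\Vert_\infty}T(\vect{p})$.

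For the fronthaul problem, we write
\begin{align*}
\mathrm{SINR}_{{\rm fh},l}(\vect{P}) = \frac{P_l}{I_l(\vect{P})}, \qquad
I_l(\vect{P}) = \Big(\kappa_{\rm fh}\vect{f}_l^{\Htran}\vect{A}_l(\vect{P})^{-1}\vect{f}_l\Big)^{-1}.
\end{align*}
By the Rayleigh-quotient characterization, $\vect{b}_l^{\Htran}\vect{A}_l^{-1}\vect{b}_l=\max_{\vect{w}}|\vect{w}^{\Htran}\vect{b}_l|^2/(\vect{w}^{\Htran}\vect{A}_l\vect{w})$. This gives
\begin{align*}
I_l(\vect{P})=\min_{\Vert\vect{w}\Vert=1}\frac{\vect{w}^{\Htran}\vect{A}_l(\vect{P})\vect{w}}{\kappa_{\rm fh}|\vect{w}^{\Htran}\vect{f}_l|^2}.
\end{align*}
For each fixed $\vect{w}$, this quantity is affine in $\vect{P}$ with nonnegative coefficients and a strictly positive constant term $N_0B_{\rm frt}\Vert\vect{w}\Vert^2$. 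Note that $P_l$ appears in $\vect{A}_l$ through the self-distortion term $(1-\kappa_{\rm fh})P_l\vect{f}_l\vect{f}_l^{\Htran}$, which is harmless since its coefficient is nonnegative. A pointwise minimum of such affine functions is concave and positive on $\mathbb{R}_+^L$. Hence $I_l$ is a standard interference function: it is positive, monotone, and scalable, with $I_l(\alpha\vect{P})<\alpha I_l(\vect{P})$ for $\alpha>1$ thanks to the noise. It is moreover concave.

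For the access problem, the SINR in \eqref{eq:UE-SINR} is computed with a fixed (ideal-hardware MMSE) combiner whose statistics do not depend on $\boldsymbol{\eta}$. We therefore assume the combiner is held fixed, and we take the expectations as fixed coefficients. Then $\mathrm{SINR}_k=\eta_k/I_k(\boldsymbol{\eta})$, where
\begin{align*}
I_k(\boldsymbol{\eta})=\frac{\sum_i c_{ki}\eta_i+\sigma^2_{\rm ac}\mathbb{E}\{\Vert\vect{v}_k\Vert^2\}}{\kappa_{\rm ac}|\mathbb{E}\{\vect{v}_k^{\Htran}\vect{h}_k\}|^2}.
\end{align*}
The coefficients satisfy $c_{ki}\ge 0$, and in particular $c_{kk}=\kappa_{\rm ac}\mathrm{Var}(\vect{v}_k^{\Htran}\vect{h}_k)+(1-\kappa_{\rm ac})\mathbb{E}\Vert\vect{v}_k\odot\vect{h}_k\Vert^2\ge0$. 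This is an affine standard interference function with positive noise. We assume $\mathbb{E}\{\vect{v}_k^{\Htran}\vect{h}_k\}\neq 0$, and for the fronthaul problem $\vect{f}_l\neq\vect{0}$.

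With the problems in this form, the update $P_l\leftarrow P_l/\mathrm{SINR}_{{\rm fh},l}$ is exactly $P_l\leftarrow I_l(\vect{P})$, and the normalization maps onto $\{\Vert\vect{P}\Vert_\infty=P^{\max}\}$. We would then invoke the known result for concave standard interference mappings with the monotone norm $\Vert\cdot\Vert_\infty$ (Nuzman's conditional eigenvalue theorem; Krause; Cavalcante--Stańczak). That result states the following: there exist a unique $\lambda>0$ and a unique $\vect{p}^\star>0$ with $T(\vect{p}^\star)=\lambda\vect{p}^\star$ and $\Vert\vect{p}^\star\Vert_\infty=P^{\max}$, and the normalized iteration converges to $\vect{p}^\star$ from any positive start. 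For the access link the mapping is affine, so this reduces to classical Perron--Frobenius applied to the matrix $\mathbf{C}+\mathbf{n}\mathbf{e}_m^{\Ttran}/P^{\max}$ on the face where coordinate $m$ is active.

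To show global optimality, note that at the fixed point all SINRs equal $1/\lambda$. Suppose some feasible $\vect{P}'$ achieved $\min_l P'_l/I_l(\vect{P}')>1/\lambda$. Scale $\vect{P}'$ against $\vect{P}^\star$ and use monotonicity and scalability of $I$, following the standard comparison argument of Yates / Boche--Schubert. This yields a contradiction with $\Vert\vect{P}'\Vert_\infty\le P^{\max}$. Thus $t^\star=1/\lambda$ is optimal for \eqref{eq:fh_subproblem}, and likewise for \eqref{eq:ac_subproblem}.

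\textbf{Main obstacle.} Convergence is the hardest step, in two respects. First, we must verify that the fronthaul $I_l$ is genuinely concave and standard despite $P_l$ entering its own $\vect{A}_l$; the min-over-$\vect{w}$ representation handles this. Second, convergence of the $\Vert\cdot\Vert_\infty$-normalized iteration (as opposed to a sum-power normalization) must be justified. Here we would rely on Hilbert/Thompson-metric contraction of the positive, concave, noise-containing mapping, following Krause's theorem.
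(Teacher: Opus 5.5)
Your proposal is correct and follows essentially the same route as the paper, whose proof simply defers to \cite{demir2021cell,Tan2014}: cast each subproblem as max-min SINR balancing with a positive concave (standard) interference mapping, then apply Krause's nonlinear Perron--Frobenius theorem to the $\Vert\cdot\Vert_\infty$-normalized fixed-point iteration. Your explicit checks fill in details the paper leaves to those references: that $I_l(\vect{P})=(\kappa_{\rm fh}\vect{f}_l^{\Htran}\vect{A}_l^{-1}\vect{f}_l)^{-1}$ stays concave and standard even though $P_l$ enters $\vect{A}_l$ through the self-distortion term, and that the access-link SINR is affine-standard only when the combiner is held fixed and independent of $\boldsymbol{\eta}$, an assumption the paper also makes implicitly by treating the combiner as given.
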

\begin{proof}
The proof follows from \cite{demir2021cell,Tan2014}.
\end{proof}

\section{Numerical Results}

In this section, we evaluate the uplink performance of the considered OFDM-based cell-free massive MIMO system with wireless fronthaul under the proposed sequential fixed-point max-min power control framework. We consider a network with $L=64$ APs, $K=12$ single-antenna UEs, $N=2$ antennas per AP on the access link, and $M=256$ antennas at the CPU for fronthaul reception. The access link operates over a $50$\,MHz mid-band channel, while the fronthaul uses either Band~1 with $28$\,GHz carrier frequency and $400$\,MHz bandwidth, or Band~2 with $100$\,GHz carrier frequency and $2$\,GHz bandwidth. The coherence block length is $\tau_c=200$, of which $\tau_p=8$ samples are used for uplink pilots. The maximum UE transmit power is $P^{\max}_{\rm UE}=0.2$\,W and the maximum fronthaul transmit power per AP is $P^{\max}_{\rm fh}=10$\,W. For the imperfect hardware case, we set $\kappa_{\rm ac}=\kappa_{\rm fh}=0.98$.

The APs and UEs are uniformly distributed in a square area of $1\,\text{km} \times 1\,\text{km}$, and the distances are computed by accounting for a vertical separation of $10$\,m between APs and UEs, and $20$\,m between APs and the CPU. The access-link channel gains follow a distance-dependent pathloss model given by $-32.4 - 20\log_{10}(f_c) - 31.9\log_{10}(d)$ (in dB), where the carrier frequency is $f_c=6$\,GHz, $d$ denotes the 3D distance in meters, and independent log-normal shadowing with standard deviation $8.2$\,dB is added \cite{3GPP5G}. The thermal noise power is computed based on a $50$\,MHz bandwidth and a noise figure of $5$\,dB. For the wireless fronthaul, the pathloss is modeled as $-32.4 - 20\log_{10}(f_c) - 21\log_{10}(d)$ (in dB), where $f_c$ is the carrier frequency in GHz (either $28$ or $100$ GHz), and shadowing with standard deviation $4$~dB is included. The fronthaul noise power is determined according to the corresponding bandwidth (either $400$\,MHz or $2$\,GHz) and a noise figure of $5$\,dB. The access channels are modeled as spatially correlated Rayleigh fading with local scattering, while the fronthaul channels follow a pure line-of-sight (LOS) model.

Fig.~1 shows the cumulative distribution function (CDF) of the uplink rate per UE when the wireless fronthaul operates in Band~1 (mmWave). The proposed fixed-point scheme provides a substantially improved fairness compared to the maximum-power baseline. In particular, the lower tail of the CDF is significantly shifted to higher rates, indicating a notable improvement for the worst-case UEs. This gain stems from the max-min nature of the proposed algorithm, which jointly balances the fronthaul and access-link SINRs while minimizing the fronthaul time expansion factor. In contrast, the maximum-power baseline leads to a wider spread in UE rates, with a non-negligible fraction of UEs experiencing very low data rates. Moreover, imperfect hardware mainly impacts the high-rate UEs, as seen from the right tail of the CDFs.

Fig.~2 presents the corresponding results for Band~2 (sub-THz). Due to the significantly larger fronthaul bandwidth, the overall rates increase for both schemes. However, in this regime, the performance gap between the proposed method and the maximum-power baseline becomes even more pronounced. This highlights that, despite the increased fronthaul capacity, the max-min optimization remains critical to efficiently utilize the available resources and to balance the system. In particular, without proper fronthaul-aware power control, strong UEs tend to dominate the system, whereas the proposed method enforces a more uniform rate distribution across UEs.

\begin{figure}[t!]
	\begin{center}
		\includegraphics[trim={0.6cm 0cm 1cm 0.6cm},clip,width=8cm]{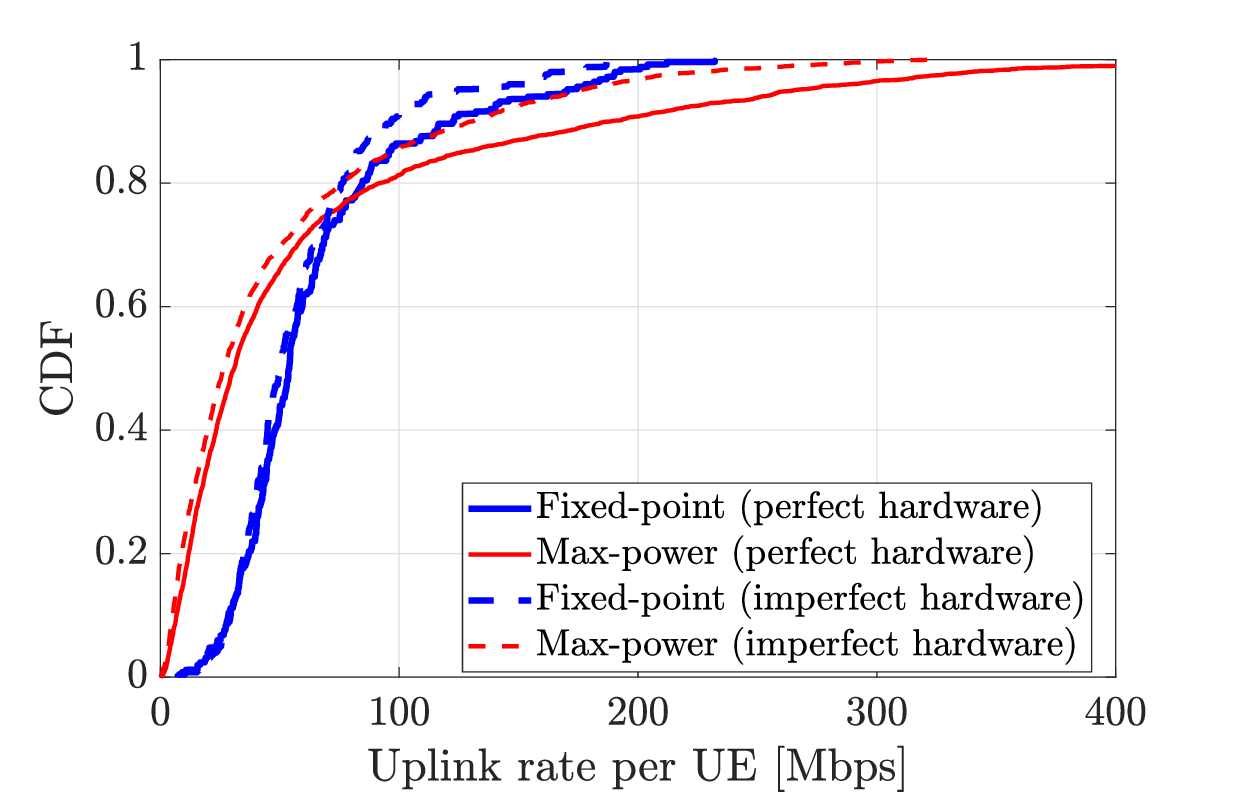}
             \vspace{-1mm}
		\caption{The CDF of uplink rate per UE for mmWave-based fronthaul.}
\label{fig1}
\vspace{-8mm}
	\end{center}
\end{figure}

\begin{figure}[t!]
	\begin{center}
		\includegraphics[trim={0.6cm 0cm 1cm 0.6cm},clip,width=8cm]{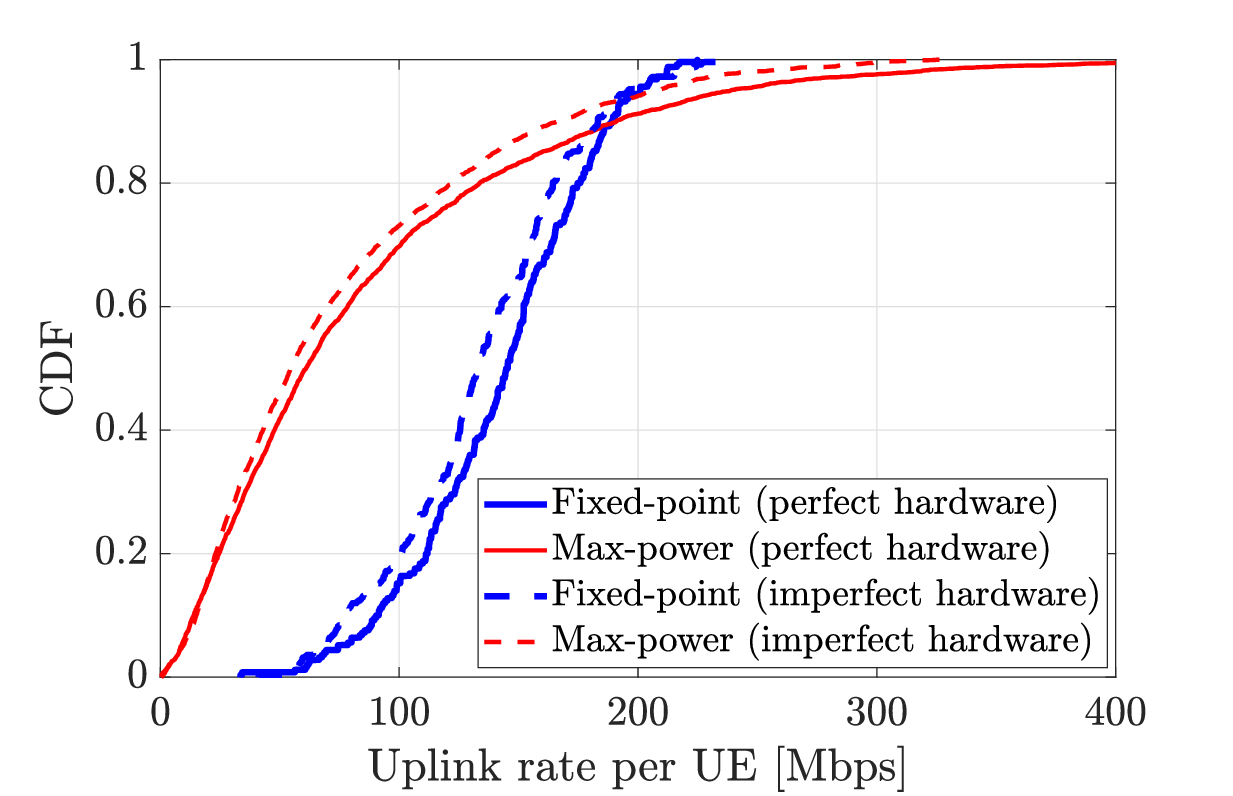}
        \vspace{-1mm}
		\caption{The CDF of uplink rate per UE for sub-THz-based fronthaul.}
\label{fig2}
\vspace{-8mm}
	\end{center}
\end{figure}
\section{Conclusion}

This paper investigated cell-free massive MIMO with wireless fronthaul under hardware impairments and a functional-split architecture. A unified analytical framework and a max-min fair resource allocation scheme were developed to jointly optimize access and fronthaul resources. The results show that accounting for fronthaul limitations is critical for achieving both high spectral efficiency and fairness, and that the performance gains of the proposed fixed-point method become even more pronounced at higher fronthaul bandwidths (e.g., sub-THz compared to mmWave).
\bibliographystyle{IEEEtran}
\bibliography{IEEEabrv,refs}

\end{document}